\newtheorem{theorem}{Theorem}
\newcolumntype{x}[1]{%
>{\raggedleft\hspace{0pt}}p{#1}}%
\newtheorem{assump}{Assumption}
\newcommand{\eps}{\varepsilon}
\newcommand{\bLambda}{\bm{\Lambda}}
\newcommand{\E}{\mathsf{E}}
\newcommand{\Var}{\mathsf{Var}\,}
\newcommand{\Cov}{\mathsf{Cov}\,}
\newcommand{\Corr}{\mathsf{Corr}\,}
\newcommand{\prob}{\mathsf{P}}
\newcommand{\dist}{\mathscr{D}}
\newcommand{\ud}{\mbox{d}}
\definecolor{seda}{gray}{0.9}
\definecolor{sedaseda}{gray}{0.8}
\begin{document}
\pagestyle{empty} 

\begin{center}
\Large\textbf{Change Point in Panel Data with Small Fixed Panel Size:\\Ratio and Non-Ratio Test Statistics} \\[11pt]
\normalsize
Barbora Pe\v{s}tov\'{a}$^{1}$, Michal Pe\v{s}ta$^{2}$\footnote{Corresponding author: \href{mailto:michal.pesta@mff.cuni.cz}{michal.pesta@mff.cuni.cz}}\\[11pt]

\footnotesize
1. Institute of Computer Science, The Czech Academy of Sciences, Department of Medical Informatics and Biostatistics, Pod Vod\'{a}renskou v\v{e}\v{z}\'{i}~271/2, 18207 Prague, Czech Republic\\
2. Faculty of Mathematics and Physics, Charles University in Prague, Department of Probability and Mathematical Statistics, Sokolovsk\'{a}~49/83, 18675 Prague, Czech Republic\\
\normalsize
\end{center}

\begin{quote}
\textbf{Abstract:} The main goal is to develop and, consequently, compare stochastic methods for detection whether a~structural change in panel data occurred at some unknown time or not. Panel data of our interest consist of a~moderate or relatively large number of panels, while the panels contain a~small number of observations. Testing procedures to detect a~possible common change in means of the panels are established. Ratio and non-ratio type test statistics are considered. Their asymptotic distributions under the no change null hypothesis are derived. Moreover, we prove the consistency of the tests under the alternative. The main advantage of the ratio type statistics compared to the non-ratio ones is that the variance of the observations neither has to be known nor estimated. A~simulation study reveals that the proposed ratio statistic outperforms the non-ratio one by keeping the significance level under the null, mainly when stronger dependence within the panel is taken into account. However, the non-ratio statistic rejects the null in the simulations more often than it should, which yields higher power compared to the ratio statistic.

\textbf{Keywords:} Change point, Panel data, Change in mean, Hypothesis testing, Structural change, Fixed panel size, Short panels, Ratio type statistics, Non-ratio type statistic, CUSUM type statistics.
\end{quote}

\section{Introduction}\label{sec:intro}
The problem of an unknown common change in means of the panels is studied here, where the panel data consist of $N$ panels and each panel contains $T$ observations over time. Various values of the change are possible for each panel at some unknown common time $\tau=1,\ldots,N$. The panels are considered to be independent, but this restriction can be weakened. In spite of that, observations within the panel are usually not independent. It is supposed that a~common unknown dependence structure is present over the panels.

Tests for change point detection in the panel data have been proposed only in case when the panel size $T$ is sufficiently large, i.e., $T$ increases over all limits from an asymptotic point of view, cf.~\cite{CHH2013} or~\cite{HH2012}. However, the change point estimation has already been studied for finite $T$ not depending on the number of panels $N$, see~\cite{Bai2010} or~\cite{PP2016}. The remaining task is to develop testing procedures to decide whether a~common change point is present or not in the panels, while taking into account that the length $T$ of each observation regime is fixed and can be relatively small.




\section{Panel Change Point Model}\label{sec:model}
Let us consider the panel change point model
\begin{equation}\label{model}
Y_{i,t}=\mu_i+\delta_i\mathcal{I}\{t>\tau\}+\sigma\eps_{i,t},\quad 1\leq i\leq N,\, 1\leq t\leq T;
\end{equation}
where $\sigma>0$ is an unknown variance-scaling parameter and $T$ is fixed, not depending on $N$. The possible \emph{common change point time} is denoted by $\tau\in\{1,\ldots,T\}$. A~situation where $\tau=T$ corresponds to \emph{no change} in means of the panels. The means $\mu_i$ are panel-individual. The amount of the break in mean, which can also differ for every panel, is denoted by $\delta_i$. Furthermore, it is assumed that the sequences of panel disturbances $\{\eps_{i,t}\}_t$ are independent and within each panel the errors form a~weakly stationary sequence with a~common correlation structure. This can be formalized in the following assumption.

\begin{assump}\label{ass:A1}
\normalfont The vectors $[\eps_{i,1},\ldots,\eps_{i,T}]^{\top}$ existing on a~probability space $(\Omega,\mathcal{F},\prob)$ are $iid$ for $i=1,\ldots,N$ with $\E\eps_{i,t}=0$ and $\Var\eps_{i,t}=1$, having the autocorrelation function
\[
\rho_t=\Corr\left(\eps_{i,s},\eps_{i,s+t}\right)=\Cov\left(\eps_{i,s},\eps_{i,s+t}\right),\quad\forall s\in\{1,\ldots,T-t\},
\]
which is independent of the lag $s$, the cumulative autocorrelation function
\[
r(t)=\Var\sum_{s=1}^t \eps_{i,s}=\sum_{|s|<t}(t-|s|)\rho_s,
\]
and the shifted cumulative correlation function
\[
R(t,v)=\Cov\left(\sum_{s=1}^t\eps_{i,s},\sum_{u=t+1}^v\eps_{i,u}\right)=\sum_{s=1}^t\sum_{u=t+1}^v\rho_{u-s},\quad t<v
\]
for all $i=1,\ldots,N$ and $t,v=1,\ldots,T$.
\end{assump}

The sequence $\{\eps_{i,t}\}_{t=1}^T$ can be viewed as a~part of a~\emph{weakly stationary} process. Note that the dependent errors within each panel do not necessarily need to be linear processes. For example, GARCH processes as error sequences are allowed as well. The assumption of independent panels can indeed be relaxed, but it would make the setup much more complex. Consequently, probabilistic tools for dependent data need to be used (e.g., suitable versions of the central limit theorem). Nevertheless, assuming, that the claim amounts for different insurance companies are independent, is reasonable. Moreover, the assumption of a~common homoscedastic variance parameter $\sigma$ can be generalized by introducing weights $w_{i,t}$, which are supposed to be known. Being particular in actuarial practice, it would mean to normalize the total claim amount by the premium received, since bigger insurance companies are expected to have higher variability in total claim amounts paid.

It is required to test the \emph{null hypothesis} of no change in the means
\[
H_0:\,\tau=T
\]
against the~\emph{alternative} that at least one panel has a~change in mean
\[
H_1:\,\tau<T\quad\mbox{and}\quad\exists i\in\{1,\ldots,N\}:\,\delta_i\neq 0.
\]

\section{Test Statistic and Asymptotic Results}\label{sec:results}
We propose a~\emph{ratio type statistic} to test $H_0$ against $H_1$, because this type of statistic does not require estimation of the nuisance parameter for the variance. Generally, this is due to the fact that the variance parameter simply cancels out from the nominator and denominator of the statistic. 
For surveys on ratio type test statistics, we refer to \cite{CH1997}, \cite{Barborka2011}, and~\cite{HHH2008}. Our particular panel change point CUSUM test statistic is
\[
\mathcal{C}_N(T)=\frac{1}{\sqrt{N}}\max_{t=1,\ldots,T-1}\left|\sum_{i=1}^N\sum_{s=1}^t\left(Y_{i,s}-\widebar{Y}_{i,T}\right)\right|,
\]
which is going to be compared with the ratio test statistics
\[
\mathcal{R}_N(T)=\max_{t=2,\ldots,T-2}\frac{\max_{s=1,\ldots,t}\left|\sum_{i=1}^N\sum_{r=1}^s\left(Y_{i,r}-\widebar{Y}_{i,t}\right)\right|}{\max_{s=t,\ldots,T-1}\left|\sum_{i=1}^N\sum_{r=s+1}^T\left(Y_{i,r}-\widetilde{Y}_{i,t}\right)\right|},
\]
where $\widebar{Y}_{i,t}$ is the average of the first $t$ observations in panel $i$ and $\widetilde{Y}_{i,t}$ is the average of the last $T-t$ observations in panel $i$, i.e.,
\[
\widebar{Y}_{i,t}=\frac{1}{t}\sum_{s=1}^t Y_{i,s}\quad\mbox{and}\quad\widetilde{Y}_{i,t}=\frac{1}{T-t}\sum_{s=t+1}^T Y_{i,s}.
\]

elaborated in~\cite{PP2015}. It will be demonstrated by simulations that $\mathcal{R}_N(T)$ keeps the theoretical significance level, however, $\mathcal{C}_N(T)$ does not.


Firstly, we derive the behavior of the test statistics under the null hypothesis.

\begin{theorem}[Under Null]\label{underNull}
Under hypothesis $H_0$ and Assumption~\ref{ass:A1}
\[
\mathcal{C}_N(T)\xrightarrow[N\to\infty]{\dist}\sigma\max_{t=1,\ldots,T-1}\left|X_t-\frac{t}{T}X_T\right|
\]
and
\[
\mathcal{R}_N(T)\xrightarrow[N\to\infty]{\dist}\max_{t=2,\ldots,T-2}\frac{\max_{s=1,\ldots,t}\left|X_s-\frac{s}{t}X_t\right|}{\max_{s=t,\ldots,T-1}\left|Z_s-\frac{T-s}{T-t}Z_t\right|},
\]
where $Z_t:=X_T-X_t$ and $[X_1,\ldots,X_T]^{\top}$ is a~multivariate normal random vector with zero mean and covariance matrix $\bLambda=\{\lambda_{t,v}\}_{t,v=1}^{T,T}$ such that
\[
\lambda_{t,t}=r(t)\quad\mbox{and}\quad\lambda_{t,v}=r(t)+R(t,v),\,\, t<v.
\]
\end{theorem}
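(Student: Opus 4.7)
The plan is as follows. Under $H_0$ the change indicator vanishes, so $Y_{i,t}=\mu_i+\sigma\eps_{i,t}$. Substituting into each statistic, the centering operations eliminate $\mu_i$ entirely. Writing the partial sums $U_{i,t}:=\sum_{s=1}^{t}\eps_{i,s}$, one checks by direct expansion that
\[
\sum_{s=1}^{t}\bigl(Y_{i,s}-\widebar{Y}_{i,T}\bigr)=\sigma\bigl(U_{i,t}-\tfrac{t}{T}U_{i,T}\bigr),
\]
and similarly the numerator and denominator in $\mathcal{R}_N(T)$ reduce to $\sigma\bigl(U_{i,s}-(s/t)U_{i,t}\bigr)$ and $\sigma\bigl[(U_{i,T}-U_{i,s})-\tfrac{T-s}{T-t}(U_{i,T}-U_{i,t})\bigr]$ respectively, so $\sigma$ cancels in the ratio. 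A factor $1/\sqrt{N}$ can then be inserted into both numerator and denominator of $\mathcal{R}_N(T)$ without altering its value, in preparation for a CLT.

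The core step is the multivariate central limit theorem applied to the iid $T$-vectors $\bu_i:=(U_{i,1},\ldots,U_{i,T})^{\top}$:
\[
\frac{1}{\sqrt{N}}\sum_{i=1}^{N}\bu_i\xrightarrow[N\to\infty]{\dist}\bX=(X_1,\ldots,X_T)^{\top}\sim\mathcal{N}(\mathbf{0},\bLambda).
\]
Assumption~\ref{ass:A1} gives $\Var U_{i,t}=r(t)$ directly, and the decomposition $U_{i,v}=U_{i,t}+\sum_{u=t+1}^{v}\eps_{i,u}$ together with bilinearity of covariance yields $\Cov(U_{i,t},U_{i,v})=r(t)+R(t,v)$ for $t<v$. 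These match exactly the entries of $\bLambda$.

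The conclusion then follows from the continuous mapping theorem. For $\mathcal{C}_N(T)$, the functional $(x_1,\ldots,x_T)\mapsto\max_{t=1,\ldots,T-1}|x_t-(t/T)x_T|$ is Lipschitz on $\mathbb{R}^T$, so the first limit is immediate with the factor $\sigma$ that was pulled out. For $\mathcal{R}_N(T)$, I would apply CMT to the functional expressing the ratio statistic, noting that in the limit $U_{i,T}-U_{i,s}$ corresponds to $Z_s=X_T-X_s$.

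The main obstacle is that the ratio functional is not continuous on all of $\mathbb{R}^T$: it fails precisely where some denominator vanishes. To invoke CMT it suffices to verify that $\prob$-almost surely the denominator
\[
\max_{s=t,\ldots,T-1}\bigl|Z_s-\tfrac{T-s}{T-t}Z_t\bigr|
\]
is strictly positive for every $t\in\{2,\ldots,T-2\}$. For each fixed $t$, at least one $s$-value (for instance $s=T-1$) yields the modulus of a nondegenerate Gaussian linear combination of $X_1,\ldots,X_T$, which is zero with probability zero; a union bound over the finitely many $t$'s keeps the exceptional set null. This verifies the hypothesis of CMT and produces the claimed limit for $\mathcal{R}_N(T)$.
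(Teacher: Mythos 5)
Your proposal follows essentially the same route as the paper's proof: reduce both statistics to centered partial sums of the errors (eliminating $\mu_i$ and factoring out $\sigma$), apply the multivariate Lindeberg--L\'{e}vy CLT to the iid vectors of cumulative errors with the covariance matrix $\bLambda$ verified exactly as you do, and pass to the limit through the max and ratio functionals. The only point where you go beyond the paper is the explicit continuous-mapping check that the limiting denominator is almost surely positive, which the paper leaves implicit; note that your nondegeneracy claim for the linear combination at $s=T-1$ tacitly requires the correlation structure (hence $\bLambda$) to be nondegenerate, an assumption the paper also makes silently.
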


The limiting distribution does not depend on the variance nuisance parameter $\sigma$, but it depends on the unknown correlation structure of the panel change point model, which has to be estimated for testing purposes. The way of its estimation is shown in Section~\ref{sec:covest}. Note that in case of independent observations within the panel, the correlation structure and, hence, the covariance matrix $\bLambda$ is simplified such that $r(t)=t$ and $R(t,v)=0$.

Next, we show how the test statistic behaves under the alternative.

\begin{assump}\label{alternativeDelta}
\normalfont $\lim_{N\to\infty}\frac{1}{\sqrt{N}}\left|\sum_{i=1}^N\delta_i\right|=\infty$.
\end{assump}

\begin{theorem}[Under Alternative]\label{underAlternative}
If $\tau\leq T-3$, then under Assumptions~\ref{ass:A1}, \ref{alternativeDelta} and alternative $H_1$
\begin{equation}
\mathcal{C}_{N}(T)\xrightarrow[N\to\infty]{\prob}\infty\quad\mbox{and}\quad\mathcal{R}_{N}(T)\xrightarrow[N\to\infty]{\prob}\infty.
\end{equation}
\end{theorem}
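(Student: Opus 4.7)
The approach is to substitute model~(\ref{model}) into both statistics and split every centred block sum into a \emph{deterministic signal} driven by the $\delta_i$'s and a \emph{stochastic nuisance} driven by the $\eps_{i,t}$'s. The panel-individual means $\mu_i$ disappear automatically from any term of the form $Y_{i,r}-\widebar{Y}_{i,\mathcal{J}}$ or $Y_{i,r}-\widetilde{Y}_{i,\mathcal{J}}$, leaving a deterministic coefficient $\gamma$ (depending only on the index sets and $\tau$) multiplying $\delta_i$. Summed over $i$ and normalised by $1/\sqrt{N}$, the noise part is $\Op(1)$ by the same multivariate CLT used in Theorem~\ref{underNull}, whereas the signal part equals $\gamma\cdot N^{-1/2}\sum_i\delta_i$ and diverges in absolute value by Assumption~\ref{alternativeDelta} whenever $\gamma\neq 0$. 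The whole game is therefore to locate, inside each statistic, an index configuration at which $\gamma$ is nonzero.

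For $\mathcal{C}_N(T)$ I would simply take $t=\tau$ in the outer maximum. A one-line calculation gives
\[
\sum_{s=1}^{\tau}\Bigl(\mathcal{I}\{s>\tau\}-\tfrac{T-\tau}{T}\Bigr)=-\tfrac{\tau(T-\tau)}{T}\neq 0\quad\text{for }\tau\in\{1,\ldots,T-1\},
\]
so at $t=\tau$ the signal coefficient is $-\tau(T-\tau)/T$. The triangle inequality and Assumption~\ref{alternativeDelta} then yield $\mathcal{C}_N(T)\ge\tfrac{\tau(T-\tau)}{T}\cdot N^{-1/2}|\sum_i\delta_i|-\Op(1)\to\infty$ in probability.

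The ratio statistic is the delicate one, because the naive choice $t=\tau$ kills the signal in the numerator: with $t\le\tau$ all of $Y_{i,1},\ldots,Y_{i,t}$ share the common pre-change mean, so $\widebar{Y}_{i,t}$ carries no $\delta_i$ and the numerator reduces to noise. I would therefore lower bound $\mathcal{R}_N(T)$ by picking $t=\tau+1$. This is precisely where the hypothesis $\tau\le T-3$ enters: it guarantees $\tau+1\in\{2,\ldots,T-2\}$ (the lower end $\tau+1\ge 2$ comes from $\tau\ge 1$ under $H_1$) and simultaneously guarantees that the inner denominator range $\{\tau+1,\ldots,T-1\}$ contains an index, namely $s=\tau+2$, at which the noise contribution is non-degenerate. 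At $t=\tau+1$ the block $Y_{i,1},\ldots,Y_{i,\tau+1}$ straddles the break, one has $\widebar{Y}_{i,\tau+1}=\mu_i+\delta_i/(\tau+1)+\sigma\widebar{\eps}_{i,\tau+1}$, and evaluating the inner numerator at $s=\tau$ produces the nonzero signal coefficient $-\tau/(\tau+1)$. By contrast every observation used in the denominator has index $r\ge\tau+2>\tau$, so the $\delta_i$'s cancel inside $Y_{i,r}-\widetilde{Y}_{i,\tau+1}$ and the denominator reduces to a purely stochastic expression of order $\Op(\sqrt{N})$, structurally identical to the ones handled in Theorem~\ref{underNull}.

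To conclude, the numerator at $t=\tau+1$ is at least $\tfrac{\tau}{\tau+1}|\sum_i\delta_i|-\Op(\sqrt{N})$, hence of exact order $|\sum_i\delta_i|$ by Assumption~\ref{alternativeDelta}, while the denominator divided by $\sqrt{N}$ converges in distribution to an almost surely strictly positive random variable (concretely, the functional appearing in the denominator of Theorem~\ref{underNull} evaluated at $t=\tau+1$, whose non-degeneracy stems from the admissibility of the index $s=\tau+2$). A standard Slutsky-type argument then forces $\mathcal{R}_N(T)\to\infty$ in probability. The principal technical obstacle is exactly this last step: one must verify that the limiting denominator is almost surely bounded away from zero so that its reciprocal remains tight and can be combined with the diverging numerator; everything else amounts to bookkeeping on top of the CLT already invoked for Theorem~\ref{underNull}.
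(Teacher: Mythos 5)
Your proposal is correct and takes essentially the same route as the paper: for $\mathcal{C}_N(T)$ evaluate the CUSUM at $t=\tau$ and split into an $\mathcal{O}_{\prob}(1)$ noise term plus a diverging signal term, and for $\mathcal{R}_N(T)$ take $t=\tau+1$, lower-bound the numerator at $s=\tau$ (signal coefficient $\tau/(\tau+1)$), and note that the denominator involves only post-change observations so that, after $\sqrt{N}$-normalization, it converges in distribution exactly as in Theorem~\ref{underNull}, with $\tau\leq T-3$ ensuring both $\tau+1\leq T-2$ and a non-degenerate denominator index $s=\tau+2$. The only quibble is your closing ``principal obstacle'': no almost-sure lower bound on the limiting denominator is needed, since a small denominator only inflates the ratio; it suffices that the denominator is $\mathcal{O}_{\prob}(\sqrt{N})$, which the distributional convergence already provides, and this is precisely how the paper concludes.
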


Assumption~\ref{alternativeDelta} is satisfied, for instance, if $0<\delta\leq\delta_i\,\forall i$ (a~common lower change point threshold) and $\delta\sqrt{N}\to\infty,\, N\to\infty$. Another suitable example of $\delta_i$s for the condition in Assumption~\ref{alternativeDelta}, can be $0<\delta_i=KN^{-1/2+\eta}$ for some $K>0$ and $\eta>0$. Or $\delta_i=Ci^{\alpha-1}\sqrt{N}$ may be used as well, where $\alpha\geq 0$ and $C>0$. The assumption $\tau\leq T-3$ means that there are at least three observations in the panel after the change point. It is also possible to redefine the test statistic by interchanging the nominator and the denominator of $\mathcal{S}_{N}(T)$. Afterwards, Theorem~\ref{underAlternative} for the modified test statistic would require three observations before the change point, i.e., $\tau\geq 3$.

Theorem~\ref{underAlternative} says that in presence of a~structural change in the panel means, the test statistic explodes above all bounds. Hence, the procedure is consistent and the asymptotic distribution from Theorem~\ref{underNull} can be used to construct the test.

\section{Estimation of the Covariance Structure}\label{sec:covest}
The estimation of the covariance matrix $\bLambda$ from Theorem~\ref{underNull} requires panels as vectors with elements having common mean (i.e., without a~jump). Therefore, it is necessary to construct an~estimate for a~possible change point. A~\emph{consistent estimate} of the change point $\tau$ in the panel data is proposed in~\cite{PP2016} as
\begin{equation}\label{tauhat}
\widehat{\tau}_N:=\arg\min_{t=2,\ldots,T}\frac{1}{w(t)}\sum_{i=1}^N\sum_{s=1}^t(Y_{i,s}-\widebar{Y}_{i,t})^2,
\end{equation}
where $\{w(t)\}_{t=2}^T$ is a~sequence of weights specified in~\cite{PP2016}.

Since the panels are considered to be independent and the number of panels may be sufficiently large, one can estimate the correlation structure of the errors $[\eps_{1,1},\ldots,\eps_{1,T}]^{\top}$ empirically. We base the errors' estimates on \emph{residuals}
\begin{equation}\label{widehate}
\widehat{e}_{i,t}:=\left\{
\begin{array}{ll}
Y_{i,t}-\widebar{Y}_{i,\widehat{\tau}_N},& t\leq\widehat{\tau}_N,\\
Y_{i,t}-\widetilde{Y}_{i,\widehat{\tau}_N},& t>\widehat{\tau}_N.
\end{array}
\right.
\end{equation}

Then, the empirical version of the autocorrelation function is
\[
\widehat{\rho}_t:=\frac{1}{\widehat{\sigma}^2 NT}\sum_{i=1}^N\sum_{s=1}^{T-t}\widehat{e}_{i,s}\widehat{e}_{i,s+t}.
\]
Consequently, the kernel estimation of the cumulative autocorrelation function and shifted cumulative correlation function is adopted in lines with~\cite{Andrews1991}:
\begin{align*}
\widehat{r}(t)&=\sum_{|s|<t}(t-|s|)\kappa\left(\frac{s}{h}\right)\widehat{\rho}_s,\\
\widehat{R}(t,v)&=\sum_{s=1}^t\sum_{u=t+1}^v\kappa\left(\frac{u-s}{h}\right)\widehat{\rho}_{u-s},\quad t<v;
\end{align*}
where $h>0$ stands for the window size and $\kappa$ belongs to a~class of kernels
\begin{multline*}
\Big\{\kappa(\cdot):\,\mathbbm{R}\to[-1,1]\,\big|\,\kappa(0)=1,\,\kappa(x)=\kappa(-x),\,\forall x,\,\int_{-\infty}^{+\infty}\kappa^2(x)\ud x<\infty,\Big.\\
\Big.\kappa(\cdot)\,\mbox{ is continuos at $0$ and at all but a finite number of other points}\Big\}.
\end{multline*}

Since the variance parameter $\sigma$ simply cancels out from the limiting distribution of Theorem~\ref{underNull}, it neither has to be estimated nor known. Nevertheless, one can use $\widehat{\sigma}^2:=\frac{1}{NT}\sum_{i=1}^{N}\sum_{s=1}^{T}\widehat{e}_{i,s}^2$.

\section{Simulations}\label{sec:simul}
A~simulation experiment was performed to study the \emph{finite sample} properties of the test statistics for a~common change in panel means. In particular, the interest lies in the empirical \emph{sizes} of the proposed tests (i.e., based on $\mathcal{R}_{N}(T)$ and $\mathcal{S}_{N}(T)$) under the null hypothesis and in the empirical \emph{rejection} rate (power) under the alternative. Random samples of panel data ($5000$ each time) are generated from the panel change point model~\eqref{model}. The panel size is set to $T=10$ and $T=25$ in order to demonstrate the performance of the testing approaches in case of small and intermediate panel length. The number of panels considered is $N=50$ and $N=200$.

The correlation structure within each panel is modeled via random vectors generated from iid, AR(1), and GARCH(1,1) sequences. The considered AR(1) process has coefficient $\phi=0.3$. In case of GARCH(1,1) process, we use coefficients $\alpha_0=1$, $\alpha_1=0.1$, and $\beta_1=0.2$, which according to \citet[Example~1]{Lindner2009} gives a~strictly stationary process. In all three sequences, the innovations are obtained as iid random variables from a~standard normal $\mathsf{N}(0,1)$ or Student $t_5$ distribution. Simulation scenarios are produced as all possible combinations of the above mentioned settings.

When using the asymptotic distribution from Theorem~\ref{underNull}, the covariance matrix is estimated as proposed in Section~\ref{sec:covest} using the Parzen kernel
\[
\kappa_{P}(x)=\left\{\begin{array}{ll}
1-6x^2+6|x|^3, & 0\leq|x|\leq 1/2;\\
2(1-|x|)^3, & 1/2\leq|x|\leq 1;\\
0, & \mbox{otherwise}.
\end{array}
\right.
\]

Bartlett window as well

Several values of the smoothing window width $h$ are tried from the interval $[2,5]$ and all of them work fine providing comparable results. To simulate the asymptotic distribution of the test statistics, $2000$ multivariate random vectors are generated using the pre-estimated covariance matrix. To access the theoretical results under $H_0$ numerically, Table~\ref{tab:H0} provides the empirical specificity (one minus size) of the asymptotic tests based on $\mathcal{R}_{N}(T)$ and $\mathcal{S}_{N}(T)$, where the significance level is $\alpha=5\%$.
\begin{table}[!ht]
\caption{Empirical specificity ($1-$size) of the test under $H_0$ for test statistics \colorbox{sedaseda}{$\mathcal{R}_{N}(T)$} and \colorbox{seda}{$\mathcal{C}_{N}(T)$} using the asymptotic critical values, considering a~significance level of $5\%$, weight function $w(t)=t^2$, and smoothing window width $h=2$}
\label{tab:H0}
\begin{center}
\begin{tabular}{cccx{.85cm}x{.85cm}x{.85cm}x{.85cm}x{.85cm}x{.85cm}}
\toprule
$T$ & $N$ & innovations & \multicolumn{2}{c}{IID} & \multicolumn{2}{c}{AR(1)} & \multicolumn{2}{c}{GARCH(1,1)} \tabularnewline
\midrule
$10$ & $50$ & $\mathsf{N}(0,1)$ & \cellcolor[gray]{0.8}{$.948$} & \cellcolor[gray]{0.9}{$.934$} & \cellcolor[gray]{0.8}{$.933$} & \cellcolor[gray]{0.9}{$.823$} & \cellcolor[gray]{0.8}{$.946$} & \cellcolor[gray]{0.9}{$.935$} \tabularnewline
 & & $t_5$ & \cellcolor[gray]{0.8}{$.951$} & \cellcolor[gray]{0.9}{$.925$} & \cellcolor[gray]{0.8}{$.932$} & \cellcolor[gray]{0.9}{$.822$} & \cellcolor[gray]{0.8}{$.946$} & \cellcolor[gray]{0.9}{$.929$} \tabularnewline
 & $200$ & $\mathsf{N}(0,1)$ & \cellcolor[gray]{0.8}{$.950$} & \cellcolor[gray]{0.9}{$.933$} & \cellcolor[gray]{0.8}{$.939$} & \cellcolor[gray]{0.9}{$.825$} & \cellcolor[gray]{0.8}{$.950$} & \cellcolor[gray]{0.9}{$.938$} \tabularnewline
 & & $t_5$ & \cellcolor[gray]{0.8}{$.948$} & \cellcolor[gray]{0.9}{$.927$} & \cellcolor[gray]{0.8}{$.935$} & \cellcolor[gray]{0.9}{$.821$} & \cellcolor[gray]{0.8}{$.948$} & \cellcolor[gray]{0.9}{$.937$} \tabularnewline
\cmidrule(){1-9}
$25$ & $50$ & $\mathsf{N}(0,1)$ & \cellcolor[gray]{0.8}{$.946$} & \cellcolor[gray]{0.9}{$.940$} & \cellcolor[gray]{0.8}{$.932$} & \cellcolor[gray]{0.9}{$.780$} & \cellcolor[gray]{0.8}{$.947$} & \cellcolor[gray]{0.9}{$.945$} \tabularnewline
 & & $t_5$ & \cellcolor[gray]{0.8}{$.948$} & \cellcolor[gray]{0.9}{$.945$} & \cellcolor[gray]{0.8}{$.932$} & \cellcolor[gray]{0.9}{$.790$} & \cellcolor[gray]{0.8}{$.946$} & \cellcolor[gray]{0.9}{$.943$} \tabularnewline
 & $200$ & $\mathsf{N}(0,1)$ & \cellcolor[gray]{0.8}{$.949$} & \cellcolor[gray]{0.9}{$.939$} & \cellcolor[gray]{0.8}{$.930$} & \cellcolor[gray]{0.9}{$.801$} & \cellcolor[gray]{0.8}{$.951$} & \cellcolor[gray]{0.9}{$.939$} \tabularnewline
 & & $t_5$ & \cellcolor[gray]{0.8}{$.953$} & \cellcolor[gray]{0.9}{$.941$} & \cellcolor[gray]{0.8}{$.931$} & \cellcolor[gray]{0.9}{$.813$} & \cellcolor[gray]{0.8}{$.952$} & \cellcolor[gray]{0.9}{$.946$} \tabularnewline
\bottomrule
\end{tabular}
\end{center}
\end{table}

It may be seen that both approaches are close to the theoretical value of specificity $.95$. As expected, the best results are achieved in case of independence within the panel, because there is no information overlap between two consecutive observations. The precision of not rejecting the null is increasing as the number of panels is getting higher and the panel is getting longer as well.

The performance of both testing procedures under $H_1$ in terms of the empirical rejection rates is shown in Table~\ref{tab:H1}, where the change point is set to $\tau=\lfloor T/2 \rfloor$ and the change sizes $\delta_i$ are independently uniform on $[1,3]$ in $33\%$, $66\%$ or in all panels.
\begin{table}[!ht]
\caption{Empirical sensitivity (power) of the test under $H_1$ for test statistics \colorbox{sedaseda}{$\mathcal{R}_{N}(T)$} and \colorbox{seda}{$\mathcal{C}_{N}(T)$} using the asymptotic critical values, considering a~significance level of $5\%$, weight function $w(t)=t^2$, and smoothing window width $h=2$}
\label{tab:H1}
\begin{center}
\begin{tabular}{ccccx{.95cm}x{.95cm}x{.95cm}x{.95cm}x{.95cm}x{.95cm}}
\toprule
$H_1$ & $T$ & $N$ & innovations & \multicolumn{2}{c}{IID} & \multicolumn{2}{c}{AR(1)} & \multicolumn{2}{c}{GARCH(1,1)} \tabularnewline
\midrule
$33\%$ & $10$ & $50$ & $\mathsf{N}(0,1)$ & \cellcolor[gray]{0.8}{$.235$} & \cellcolor[gray]{0.9}{$1.000$} & \cellcolor[gray]{0.8}{$.256$} & \cellcolor[gray]{0.9}{$.999$} & \cellcolor[gray]{0.8}{$.193$} & \cellcolor[gray]{0.9}{$1.000$} \tabularnewline
& & & $t_5$ & \cellcolor[gray]{0.8}{$.174$} & \cellcolor[gray]{0.9}{$.999$} & \cellcolor[gray]{0.8}{$.202$} & \cellcolor[gray]{0.9}{$.996$} & \cellcolor[gray]{0.8}{$.201$} & \cellcolor[gray]{0.9}{$.999$} \tabularnewline
& & $200$ & $\mathsf{N}(0,1)$ & \cellcolor[gray]{0.8}{$.453$} & \cellcolor[gray]{0.9}{$1.000$} & \cellcolor[gray]{0.8}{$.486$} & \cellcolor[gray]{0.9}{$1.000$} & \cellcolor[gray]{0.8}{$.387$} & \cellcolor[gray]{0.9}{$1.000$} \tabularnewline
& & & $t_5$ & \cellcolor[gray]{0.8}{$.360$} & \cellcolor[gray]{0.9}{$1.000$} & \cellcolor[gray]{0.8}{$.393$} & \cellcolor[gray]{0.9}{$1.000$} & \cellcolor[gray]{0.8}{$.389$} & \cellcolor[gray]{0.9}{$1.000$} \tabularnewline
\cmidrule(l){2-10}
& $25$ & $50$ & $\mathsf{N}(0,1)$ & \cellcolor[gray]{0.8}{$.376$} & \cellcolor[gray]{0.9}{$1.000$} & \cellcolor[gray]{0.8}{$.394$} & \cellcolor[gray]{0.9}{$.992$} & \cellcolor[gray]{0.8}{$.312$} & \cellcolor[gray]{0.9}{$1.000$} \tabularnewline
& & & $t_5$ & \cellcolor[gray]{0.8}{$.294$} & \cellcolor[gray]{0.9}{$1.000$} & \cellcolor[gray]{0.8}{$.301$} & \cellcolor[gray]{0.9}{$.993$} & \cellcolor[gray]{0.8}{$.312$} & \cellcolor[gray]{0.9}{$1.000$} \tabularnewline
& & $200$ & $\mathsf{N}(0,1)$ & \cellcolor[gray]{0.8}{$.685$} & \cellcolor[gray]{0.9}{$1.000$} & \cellcolor[gray]{0.8}{$.699$} & \cellcolor[gray]{0.9}{$.995$} & \cellcolor[gray]{0.8}{$.584$} & \cellcolor[gray]{0.9}{$1.000$} \tabularnewline
& & & $t_5$ & \cellcolor[gray]{0.8}{$.561$} & \cellcolor[gray]{0.9}{$1.000$} & \cellcolor[gray]{0.8}{$.565$} & \cellcolor[gray]{0.9}{$1.000$} & \cellcolor[gray]{0.8}{$.590$} & \cellcolor[gray]{0.9}{$1.000$} \tabularnewline
\cmidrule(){1-10}

$66\%$ & $10$ & $50$ & $\mathsf{N}(0,1)$ & \cellcolor[gray]{0.8}{$.450$} & \cellcolor[gray]{0.9}{$1.000$} & \cellcolor[gray]{0.8}{$.491$} & \cellcolor[gray]{0.9}{$1.000$} & \cellcolor[gray]{0.8}{$.386$} & \cellcolor[gray]{0.9}{$1.000$} \tabularnewline
& & & $t_5$ & \cellcolor[gray]{0.8}{$.360$} & \cellcolor[gray]{0.9}{$1.000$} & \cellcolor[gray]{0.8}{$.377$} & \cellcolor[gray]{0.9}{$1.000$} & \cellcolor[gray]{0.8}{$.390$} & \cellcolor[gray]{0.9}{$1.000$} \tabularnewline
& & $200$ & $\mathsf{N}(0,1)$ & \cellcolor[gray]{0.8}{$.774$} & \cellcolor[gray]{0.9}{$1.000$} & \cellcolor[gray]{0.8}{$.807$} & \cellcolor[gray]{0.9}{$1.000$} & \cellcolor[gray]{0.8}{$.677$} & \cellcolor[gray]{0.9}{$1.000$} \tabularnewline
& & & $t_5$ & \cellcolor[gray]{0.8}{$.642$} & \cellcolor[gray]{0.9}{$1.000$} & \cellcolor[gray]{0.8}{$.692$} & \cellcolor[gray]{0.9}{$1.000$} & \cellcolor[gray]{0.8}{$.688$} & \cellcolor[gray]{0.9}{$1.000$} \tabularnewline
\cmidrule(l){2-10}
& $25$ & $50$ & $\mathsf{N}(0,1)$ & \cellcolor[gray]{0.8}{$.688$} & \cellcolor[gray]{0.9}{$1.000$} & \cellcolor[gray]{0.8}{$.694$} & \cellcolor[gray]{0.9}{$1.000$} & \cellcolor[gray]{0.8}{$.581$} & \cellcolor[gray]{0.9}{$1.000$} \tabularnewline
& & & $t_5$ & \cellcolor[gray]{0.8}{$.558$} & \cellcolor[gray]{0.9}{$1.000$} & \cellcolor[gray]{0.8}{$.570$} & \cellcolor[gray]{0.9}{$1.000$} & \cellcolor[gray]{0.8}{$.594$} & \cellcolor[gray]{0.9}{$1.000$} \tabularnewline
& & $200$ & $\mathsf{N}(0,1)$ & \cellcolor[gray]{0.8}{$.951$} & \cellcolor[gray]{0.9}{$1.000$} & \cellcolor[gray]{0.8}{$.959$} & \cellcolor[gray]{0.9}{$1.000$} & \cellcolor[gray]{0.8}{$.905$} & \cellcolor[gray]{0.9}{$1.000$} \tabularnewline
& & & $t_5$ & \cellcolor[gray]{0.8}{$.874$} & \cellcolor[gray]{0.9}{$1.000$} & \cellcolor[gray]{0.8}{$.888$} & \cellcolor[gray]{0.9}{$1.000$} & \cellcolor[gray]{0.8}{$.906$} & \cellcolor[gray]{0.9}{$1.000$} \tabularnewline
\cmidrule(){1-10}

$100\%$ & $10$ & $50$ & $\mathsf{N}(0,1)$ & \cellcolor[gray]{0.8}{$.641$} & \cellcolor[gray]{0.9}{$1.000$} & \cellcolor[gray]{0.8}{$.667$} & \cellcolor[gray]{0.9}{$1.000$} & \cellcolor[gray]{0.8}{$.563$} & \cellcolor[gray]{0.9}{$1.000$} \tabularnewline
& & & $t_5$ & \cellcolor[gray]{0.8}{$.519$} & \cellcolor[gray]{0.9}{$1.000$} & \cellcolor[gray]{0.8}{$.547$} & \cellcolor[gray]{0.9}{$1.000$} & \cellcolor[gray]{0.8}{$.546$} & \cellcolor[gray]{0.9}{$1.000$} \tabularnewline
& & $200$ & $\mathsf{N}(0,1)$ & \cellcolor[gray]{0.8}{$.928$} & \cellcolor[gray]{0.9}{$1.000$} & \cellcolor[gray]{0.8}{$.945$} & \cellcolor[gray]{0.9}{$1.000$} & \cellcolor[gray]{0.8}{$.868$} & \cellcolor[gray]{0.9}{$1.000$} \tabularnewline
& & & $t_5$ & \cellcolor[gray]{0.8}{$.844$} & \cellcolor[gray]{0.9}{$1.000$} & \cellcolor[gray]{0.8}{$.869$} & \cellcolor[gray]{0.9}{$1.000$} & \cellcolor[gray]{0.8}{$.872$} & \cellcolor[gray]{0.9}{$1.000$} \tabularnewline
\cmidrule(l){2-10}
& $25$ & $50$ & $\mathsf{N}(0,1)$ & \cellcolor[gray]{0.8}{$.873$} & \cellcolor[gray]{0.9}{$1.000$} & \cellcolor[gray]{0.8}{$.884$} & \cellcolor[gray]{0.9}{$1.000$} & \cellcolor[gray]{0.8}{$.792$} & \cellcolor[gray]{0.9}{$1.000$} \tabularnewline
& & & $t_5$ & \cellcolor[gray]{0.8}{$.760$} & \cellcolor[gray]{0.9}{$1.000$} & \cellcolor[gray]{0.8}{$.771$} & \cellcolor[gray]{0.9}{$1.000$} & \cellcolor[gray]{0.8}{$.789$} & \cellcolor[gray]{0.9}{$1.000$} \tabularnewline
& & $200$ & $\mathsf{N}(0,1)$ & \cellcolor[gray]{0.8}{$.997$} & \cellcolor[gray]{0.9}{$1.000$} & \cellcolor[gray]{0.8}{$.997$} & \cellcolor[gray]{0.9}{$1.000$} & \cellcolor[gray]{0.8}{$.985$} & \cellcolor[gray]{0.9}{$1.000$} \tabularnewline
& & & $t_5$ & \cellcolor[gray]{0.8}{$.977$} & \cellcolor[gray]{0.9}{$1.000$} & \cellcolor[gray]{0.8}{$.982$} & \cellcolor[gray]{0.9}{$1.000$} & \cellcolor[gray]{0.8}{$.986$} & \cellcolor[gray]{0.9}{$1.000$} \tabularnewline
\bottomrule
\end{tabular}
\end{center}
\end{table}

One can conclude that the power of both tests increases as the panel size and the number of panels increase, which is straightforward and expected. Moreover, higher power is obtained when a~larger portion of panels is subject to have a~change in mean. The test power drops when switching from independent observations within the panel to dependent ones. Innovations with heavier tails (i.e., $t_5$) yield smaller power than innovations with lighter tails. Generally, the newly defined test statistic $\mathcal{S}_{N}(T)$ \emph{outperforms} $\mathcal{R}_{N}(T)$ in all scenarios with respect to the power.


Finally, an~early change point is discussed very briefly. We stay with standard normal innovations, iid observations within the panel, the size of changes $\delta_i$ being independently uniform on $[1,3]$ in all panels, and the change point is $\tau=3$ in case of $T=10$ and $\tau=5$ for $T=25$. The empirical sensitivities of both tests for small values of $\tau$ are shown in Table~\ref{tab:tau}.
\begin{table}[!ht]
\caption{Empirical sensitivity of the test for small values of $\tau$ under $H_1$ for test statistics \colorbox{sedaseda}{$\mathcal{R}_{N}(T)$} and \colorbox{seda}{$\mathcal{C}_{N}(T)$} using the asymptotic critical values, considering a~significance level of $5\%$, weight function $w(t)=t^2$, and smoothing window width $h=2$}
\label{tab:tau}
\begin{center}
\begin{tabular}{x{.55cm}x{.55cm}x{.65cm}x{.95cm}x{.95cm}x{.55cm}x{.55cm}x{.65cm}x{.95cm}x{.95cm}}
\toprule
$T$ & $\tau$ & $N$ & \multicolumn{2}{c}{$H_1$, iid, $\mathsf{N}(0,1)$} & $T$ & $\tau$ & $N$ & \multicolumn{2}{c}{$H_1$, iid, $\mathsf{N}(0,1)$} \tabularnewline
\midrule
$10$ & $3$ & $50$  & \cellcolor[gray]{0.8}{$.551$} & \cellcolor[gray]{0.9}{$1.000$} & $25$ & $5$ & $50$ & \cellcolor[gray]{0.8}{$.629$} & \cellcolor[gray]{0.9}{$1.000$} \tabularnewline
 & & $200$ & \cellcolor[gray]{0.8}{$.867$} & \cellcolor[gray]{0.9}{$1.000$} & & & $200$ & \cellcolor[gray]{0.8}{$.927$} & \cellcolor[gray]{0.9}{$1.000$} \tabularnewline
\bottomrule
\end{tabular}
\end{center}
\end{table}

When the change point is not in the middle of the panel, the power of the test generally falls down. The source of such decrease is that the left or right part of the panel possesses less observations with constant mean, which leads to a~decrease of precision in the correlation estimation. Nevertheless, $\mathcal{S}_{N}(T)$ again outperforms $\mathcal{R}_{N}(T)$ even for early or late change points (the late change points are not numerically demonstrated here).

\section{Conclusions}\label{sec:concl}
In this paper, we consider the change point problem in panel data with fixed panel size. Occurrence of common breaks in panel means is tested. We introduce a~ratio type test statistic and derive its asymptotic properties. Under the null hypothesis of no change, the test statistic weakly converges to a~functional of the multivariate normal random vector with zero mean and covariance structure depending on the intra-panel covariances. As shown in the paper, these covariances can be estimated and, consequently, used for testing whether a~change in means occurred or not. This is indeed feasible, because the test statistic under the alternative converges to infinity in probability. Furthermore, the whole stochastic theory behind requires relatively simple assumptions, which are not too restrictive.

A~simulation study illustrates that even for small panel size, both investigated approaches---the newly derived one based on $\mathcal{S}_{N}(T)$ and the older one proposed in~\cite{PP2015}---work fine. One may judge that both methods keep the significance level under the null, while various simulation scenarios are considered. Besides that, the power of the test is higher in case of $\mathcal{S}_{N}(T)$ compared to $\mathcal{R}_{N}(T)$. Finally, the proposed method is applied to insurance data, for which the change point analysis in panel data provides an~appealing approach.

\subsection{Discussion}\label{subsec:disc}
Our setup can be modified by considering large panel size, i.e., $T\to\infty$. Consequently, the whole theory leads to convergences to functionals of Gaussian processes with a~covariance structure derived in a~very similar fashion as for fixed $T$. However, our motivation is to develop techniques for fixed and relatively small panel size.

Dependent panels may be taken into account and the presented work might be generalized for some kind of asymptotic independence over the panels or prescribed dependence among the panels. Nevertheless, our incentive is determined by a~problem from non-life insurance, where the association of insurance companies consists of a~relatively high number of insurance companies. Thus, the portfolio of yearly claims is so diversified, that the panels corresponding to insurance companies' yearly claims may be viewed as independent and neither natural ordering nor clustering has to be assumed.

\appendix
\section{Proofs}
\begin{proof}[Proof of Theorem~\ref{underNull}]
Let us define
\[
U_N(t):=\frac{1}{\sigma\sqrt{N}}\sum_{i=1}^N\sum_{s=1}^t(Y_{i,s}-\mu_i).
\]
Using the multivariate Lindeberg-L\'{e}vy CLT for a~sequence of $T$-dimensional iid random vectors $\{[\sum_{s=1}^1\eps_{i,s},\ldots,\sum_{s=1}^T\eps_{i,s}]^{\top}\}_{i\in\mathbbm{N}}$, we have under $H_0$
\[
[U_N(1),\ldots,U_N(T)]^{\top}\xrightarrow[N\to\infty]{\dist}[X_1,\ldots,X_T]^{\top},
\]
since $\Var[\sum_{s=1}^1\eps_{1,s},\ldots,\sum_{s=1}^T\eps_{1,s}]^{\top}=\bLambda$. Indeed, the $t$-th diagonal element of the covariance matrix $\bLambda$ is $\Var\sum_{s=1}^t\eps_{1,s}=r(t)$ and the upper off-diagonal element on position $(t,v)$ is
\begin{align*}
\Cov\left(\sum_{s=1}^t\eps_{1,s},\sum_{u=1}^v\eps_{1,u}\right)&=\Var\sum_{s=1}^t\eps_{1,s}+\Cov\left(\sum_{s=1}^t\eps_{1,s},\sum_{u=t+1}^v\eps_{1,u}\right)\\
&=r(t)+R(t,v),\quad t<v.
\end{align*}
Moreover, let us define the reverse analogue to $U_N(t)$, i.e.,
\[
V_N(t):=\frac{1}{\sigma\sqrt{N}}\sum_{i=1}^N\sum_{s=t+1}^T(Y_{i,s}-\mu_i)=U_N(T)-U_N(t).
\]
Hence,
\begin{align*}
U_{N}(s)-\frac{s}{t}U_{N}(t)&=\frac{1}{\sigma\sqrt{N}}\sum_{i=1}^N\left\{\sum_{r=1}^s\left[\left(Y_{i,r}-\mu_i\right)-\frac{1}{t}\sum_{v=1}^t \left(Y_{i,v}-\mu_i\right)\right]\right\}\\
&=\frac{1}{\sigma\sqrt{N}}\sum_{i=1}^N\sum_{r=1}^s\left(Y_{i,r}-\widebar{Y}_{i,t}\right)
\end{align*}
and, consequently,
\begin{align*}
V_{N}(s)-\frac{T-s}{T-t}V_{N}(t)&=\frac{1}{\sigma\sqrt{N}}\sum_{i=1}^N\left\{\sum_{r=s+1}^T\left[\left(Y_{i,r}-\mu_i\right)-\frac{1}{T-t}\sum_{v=t+1}^T \left(Y_{i,v}-\mu_i\right)\right]\right\}\\
&=\frac{1}{\sigma\sqrt{N}}\sum_{i=1}^N\sum_{r=s+1}^T\left(Y_{i,r}-\widetilde{Y}_{i,t}\right).
\end{align*}
Using the Cram\'{e}r-Wold device, we end up with
\[
\max_{t=1,\ldots,T-1}\left|U_{N}(t)-\frac{t}{T}U_{N}(T)\right|\xrightarrow[N\to\infty]{\dist}\max_{t=1,\ldots,T-1}\left|X_t-\frac{t}{T}X_T\right|
\]
and
\begin{multline*}
\max_{t=2,\ldots,T-2}\frac{\max_{s=1,\ldots,t}\left|U_{N}(s)-\frac{s}{t}U_{N}(t)\right|}{\max_{s=t,\ldots,T-1}\left|V_{N}(s)-\frac{T-s}{T-t}V_{N}(t)\right|}\\
\xrightarrow[N\to\infty]{\dist}\max_{t=2,\ldots,T-2}\frac{\max_{s=1,\ldots,t}\left|X_s-\frac{s}{t}X_t\right|}{\max_{s=t,\ldots,T-1}\left|(X_T-X_s)-\frac{T-s}{T-t}(X_T-X_t)\right|}.
\end{multline*}
\end{proof}

\begin{proof}[Proof of Theorem~\ref{underAlternative}]
Considering $\mathcal{C}_N(T)$, we have under alternative $H_1$ that
\begin{align*}
&\frac{1}{\sigma\sqrt{N}}\left|\sum_{i=1}^N\sum_{s=1}^{\tau}\left(Y_{i,s}-\widebar{Y}_{i,T}\right)\right|=\frac{1}{\sigma\sqrt{N}}\left|\sum_{i=1}^N\sum_{s=1}^{\tau}\left(\mu_i+\sigma\eps_{i,s}-\frac{1}{T}\sum_{v=1}^{T}(\mu_i+\sigma\eps_{i,v})-\frac{1}{T}\delta_i\right)\right|\\
&=\frac{1}{\sqrt{N}}\left|\sum_{i=1}^N\sum_{s=1}^{\tau}\left(\eps_{i,s}-\widebar{\eps}_{i,T}\right)-\frac{\tau}{\sigma T}\sum_{i=1}^N\delta_i\right|=\mathcal{O}_{\prob}(1)+\frac{\tau}{\sigma T\sqrt{N}}\left|\sum_{i=1}^N\delta_i\right|\xrightarrow[]{\prob}\infty,\quad N\to\infty,
\end{align*}
where $\widebar{\eps}_{i,T}=\frac{1}{\tau}\sum_{v=1}^{T}\eps_{i,v}$.

In case of $\mathcal{R}_N(T)$, let $t=\tau+1$. Then under alternative $H_1$, it holds that
\begin{align*}
&\frac{1}{\sigma\sqrt{N}}\max_{s=1,\ldots,\tau+1}\left|\sum_{i=1}^N\sum_{r=1}^s\left(Y_{i,r}-\widebar{Y}_{i,\tau+1}\right)\right|\geq\frac{1}{\sigma\sqrt{N}}\left|\sum_{i=1}^N\sum_{r=1}^{\tau}\left(Y_{i,r}-\widebar{Y}_{i,\tau+1}\right)\right|\\
&=\frac{1}{\sigma\sqrt{N}}\left|\sum_{i=1}^N\sum_{r=1}^{\tau}\left(\mu_i+\sigma\eps_{i,r}-\frac{1}{\tau+1}\sum_{v=1}^{\tau+1}(\mu_i+\sigma\eps_{i,v})-\frac{1}{\tau+1}\delta_i\right)\right|\\
&=\frac{1}{\sqrt{N}}\left|\sum_{i=1}^N\sum_{r=1}^{\tau}\left(\eps_{i,r}-\widebar{\eps}_{i,\tau+1}\right)-\frac{\tau}{\sigma(\tau+1)}\sum_{i=1}^N\delta_i\right|=\mathcal{O}_{\prob}(1)+\frac{\tau}{\sigma(\tau+1)\sqrt{N}}\left|\sum_{i=1}^N\delta_i\right|\xrightarrow[]{\prob}\infty,\quad N\to\infty,
\end{align*}
where $\widebar{\eps}_{i,\tau+1}=\frac{1}{\tau+1}\sum_{v=1}^{\tau+1}\eps_{i,v}$.

Since there is no change after $\tau+1$ and $\tau\leq T-3$, then by Theorem~\ref{underNull} we obtain
\[
\frac{1}{\sigma\sqrt{N}}\max_{s=\tau+1,\ldots,T-1}\left|\sum_{i=1}^N\sum_{r=s+1}^T\left(Y_{i,r}-\widetilde{Y}_{i,\tau+1}\right)\right|\xrightarrow[N\to\infty]{\dist}\max_{s=\tau+1,\ldots,T-1}\left|Z_s-\frac{T-s}{T-\tau}Z_{\tau+1}\right|.
\]
\end{proof}

\section*{Acknowledgment}
Institutional support to Barbora Pe\v{s}tov\'{a} was provided by RVO:67985807. The research of Michal Pe\v{s}ta was supported by the Czech Science Foundation project ``DYME -- Dynamic Models in Economics'' No.~P402/12/G097.



\end{document}